\documentclass[journal]{IEEEtran}
\usepackage{amsmath,amsfonts,amsthm}
\usepackage{algorithmic}
\usepackage{algorithm}
\usepackage{array}
\usepackage{textcomp}
\usepackage{stfloats}
\usepackage{url}
\usepackage{verbatim}
\usepackage{graphicx}
\usepackage{cite}
\usepackage{color}
\usepackage{tikz}
\usepackage{circuitikz}
\usepackage{accents}

\newtheorem{lem}{Lemma}

\newtheorem{res}{Result}

\expandafter\def\expandafter\normalsize\expandafter{%
    \normalsize%
    \setlength\abovedisplayskip{4pt}%
    \setlength\belowdisplayskip{4pt}%
    \setlength\abovedisplayshortskip{-8pt}%
    \setlength\belowdisplayshortskip{2pt}%
}

\begin{document}

\title{Dimensional Scaling Laws for Continuous Fluid Antenna Systems}

\author{Peter J. Smith,~\IEEEmembership{Fellow,~IEEE,} Amy S. Inwood,~\IEEEmembership{Member,~IEEE,} \\ Michail Matthaiou,~\IEEEmembership{Fellow,~IEEE,} and Rajitha Senanayake,~\IEEEmembership{Member,~IEEE}

\thanks{This work was supported in part by the U.K. Engineering and Physical Sciences Research Council (EPSRC) (grant No. EP/X04047X/1). The work of M. Matthaiou has received funding from the European Research Council (ERC) under the European Union’s Horizon 2020 research and innovation programme (grant agreement No. 101001331).}

\thanks{P. J. Smith is with the School of Mathematics and Statistics, Victoria University of Wellington,  Wellington 6140, New Zealand, (e-mail: peter.smith@vuw.ac.nz).}

\thanks{A. S. Inwood and M. Matthaiou are with the Centre for Wireless Innovation (CWI), Queen’s University Belfast, Belfast BT3 9DT, U.K., (e-mail:  \{a.inwood, m.matthaiou\}@qub.ac.uk).}

\thanks{R. Senanayake is with the Department of Electrical and Electronic Engineering, University of Melbourne, Melbourne, VIC 3010, Australia, (email: rajitha.senanayake@unimelb.edu.au).}
}

\maketitle

\begin{abstract}

 Consider the signal-to-noise ratio (SNR) of a continuous fluid antenna system (CFAS) operating over a Rayleigh fading channel. In this paper, we extend traditional system assumptions and consider spatially coherent isotropic correlation, continuous positioning of the antenna rather than discrete, and the use of multi-dimensional space (1D, 2D and 3D). By focusing on the upper tail of the received SNR distribution (the high SNR probability (HSP)), we are able to derive asymptotically exact closed-form formulas for the HSP.  Finally, these results lead to scaling laws which describe the increase in the HSP as we employ more dimensions and the optimal CFAS dimensions.
\end{abstract}
\begin{IEEEkeywords}
Fluid antenna systems,  high SNR probability, random fields, Rayleigh fading, 3D antenna geometries.
\end{IEEEkeywords}

\section{Introduction} \label{sec:intro}

In recent years, fluid antenna systems (FASs) have gained considerable research attention due to their ability to enhance the signal-to-noise ratio (SNR) of a system and mitigate interference \cite{kkwong1,kkwong2}. The advantages of FASs, demonstrated in \cite{kkwong1,kkwong2,davies} and elsewhere, have led to their proposal as a potential technology for sixth-generation (6G) wireless systems \cite{FAS_6G}.  A FAS has come to be known as an umbrella term for any antenna structure where flexible positioning is possible, whether it be implemented with fluid antennas (FAs) with dielectric holders, mechanically movable antennas or pixel-based switching. The fundamental advantage offered by all of these systems is the ability to make use of spatial variations in the channel across the potential locations of the antenna(s).

Most work in the area considers antennas which can be located at a finite number of positions \cite{kkwong1, FAS_6G,alouini}. However, the challenging and under-explored case of continuous space is also of interest and initial work in this area can be seen in \cite{psomas}. Also, most work considers antennas positioned along a line, but 2D and 3D FASs can lead to enhanced performance \cite{FAS_6G,kkwong3,kkwong4}. Hence, we consider positioning in a 1D line, a 2D rectangle and a 3D cuboid.

Motivated by these gaps, we consider a continuous FAS (CFAS) and employ a spatially consistent correlation model in 1D-3D. This is not always possible in discrete FAS systems due to analytical complexity. Instead, innovative approximations are used including block-correlation models \cite{ramirez-espinosa_new_2024} and copula models \cite{ghadi_copula_2023,ghadi_gaussian_2024}, as well as simpler correlation structures \cite{alouini}. In this work, we do not need to rely on any approximations for the correlation structure and  use the Jakes' model for all spatial correlations in 1D-3D.

Following \cite{psomas}, we focus on the peak performance offered by CFASs, i.e. the upper tail of the cumulative distribution function (CDF) of the received SNR. While analysis of the lower tail, giving outage probability, is highly desirable, severe outages require the SNR to be small over the whole range of possible antenna positions. Such events are referred to as small ball probabilities and are much harder to analyze than high SNR events \cite{li_gaussian_2001}. In contrast, analyzing the upper tail gives simple, closed-form results which are asymptotically exact and provide important performance insights.

In this paper, we make the following contributions:
\begin{itemize}
    \item We derive a highly accurate approximation to the high SNR probability (HSP), based on the received SNR for 1D, 2D and 3D CFASs.
    \item We propose an approximate dimensional scaling law that relates the HSP of a fixed antenna to that of an $n$-dimensional continuous FA (CFA).
    \item Finally, we provide the optimal shape that maximizes the HSP approximations for CFAs in rectangular 2D and cuboidal 3D spaces given limits on each dimension.
\end{itemize}

\textit{Notation}: Lower boldface letters represent vectors; $\mathbb{E}[\cdot]$ is the statistical expectation; $\mathrm{Var}[\cdot]$ is the variance; $P(A)$ is the probability of event $A$; $\mathcal{CN}(\mu,\sigma^2)$ is a complex Gaussian distribution with mean $\mu$ and variance $\sigma^2$; $\chi_k^2$ is a chi-squared distribution with $k$ degrees of freedom; $\mathrm{Exp}(\mu)$ is an exponential distribution with a mean of $\mu$; $J_0(\cdot)$ is the zeroth order Bessel function of the first kind; $\Gamma(\cdot)$ is the gamma function; $(\cdot)^*$ is the complex conjugate; $||\cdot||$ is the Euclidean norm and $\mathrm{dim}(A)$ is the number of dimensions of $A$.

\section{System Model}\label{sec:sysmodel}
Consider an antenna located at the coordinate $\mathbf{t}$ in a set of coordinate points, denoted $A$. As shown in Fig.~\ref{fig:dimensions}, we consider a range of scenarios for $A$, ranging from a fixed point to a 3D cuboid. Note that while the antenna position changes, its shape remains constant, so it is reasonable to assume that the propagation characteristics are constant. The channel from a single antenna source to the CFA is denoted $h(\mathbf{t})$, and in the case of Rayleigh fading we have $h(\mathbf{t})\sim \mathcal{CN}(0,\beta)$, where $\beta$ is the channel gain. We assume that the set $A$ is small enough so that $\beta$ is constant across $A$. Hence, the received signal at location $\mathbf{t}$ is given by
\begin{equation}
    r(\mathbf{t}) = h(\mathbf{t})s + n,
\end{equation}
where $s$ is the transmitted signal with $\mathbb{E}[|s|^2]=E_s$ and $n\sim\mathcal{CN}(0,\sigma^2)$ is the additive white Gaussian noise. Hence, the SNR at $\mathbf{t}$ is $\mathrm{SNR}(\mathbf{t})=\tfrac{E_s}{\sigma^2}|h(\mathbf{t})|^2$. The optimal SNR in a perfect CFAS is
\begin{equation}
    \mathrm{SNR} = \sup_{\mathbf{t}\in A}\left\{\frac{E_s}{\sigma^2}|h(\mathbf{t})|^2\right\}.
\end{equation}
Thus, we investigate the fundamental benefits of an antenna that can move in multi-dimensional space to maximize SNR.

For convenience, we normalize the channel so that $h(\mathbf{t}) = \sqrt{{\beta}/{2}}\,\Tilde{h}(\mathbf{t})$. Hence, $\mathbb{E}[|\Tilde{h}(\mathbf{t})|^2]=2$, and the real and imaginary components of $\Tilde{h}(\mathbf{t})$ both have unit power. This normalization makes it easier to use the random field theory in Section \ref{sec:analysis}. Using the normalized channel, we have
\begin{align}
    \mathrm{SNR} = \frac{\beta E_s}{2\sigma^2}\sup_{\mathbf{t}\in A} \left\{|\Tilde{h}(\mathbf{t})|^2\right\}
    = \frac{\beta E_s}{2\sigma^2}\sup_{\mathbf{t}\in A}\left\{X(\mathbf{t})\right\},
\end{align}
where $X(\mathbf{t})$ is a $\chi_2^2$ process. As discussed in Section \ref{sec:intro}, in this paper, we focus on the HSP defined by $P_{hs} = P(\mathrm{SNR}>u)$ for some high SNR threshold. Hence,
\begin{equation}
    P_{hs} = P\left(\sup_{\mathbf{t}\in A}\left\{X(\mathbf{t})\right\} \geq u_0\right), \label{eq:Ppp}
\end{equation}
where $u_0 = \frac{2\sigma^2u}{\beta E_s}$. The HSP in (\ref{eq:Ppp}) is a convenient formulation as it expresses the probability in terms of the supremum of a $\chi_2^2$ process over $A$ matching the random field notation in \cite{adler}.

In concrete terms we can think of four different physical layouts for (\ref{eq:Ppp}). These are defined in terms of their dimensionality below and are shown in Fig. \ref{fig:dimensions}.

\textbf{Fixed antenna case}: Here, $\mathbf{t}=0$ is a fixed point, and the antenna has no dimension in which to move.

\textbf{One-dimensional case:} Here, $\mathbf{t}=t\in A = [0,T_1]$. Hence, the antenna can move along a line of length $T_1$.

\textbf{Two-dimensional case:} Here, $\mathbf{t}=(t_1,t_2)\in A = [0,T_1] \times [0,T_2]$.  Hence, the antenna can move in a 2D $T_1\times T_2$ rectangle.

\textbf{Three-dimensional case:} Here, $\mathbf{t} = (t_1, t_2, t_3)\in A = [0,T_1]\times[0,T_2]\times[0,T_3]$. Hence, the antenna can move in a 3D cuboid of dimensions $T_1\times T_2 \times T_3$.
\vspace{-1.25em}
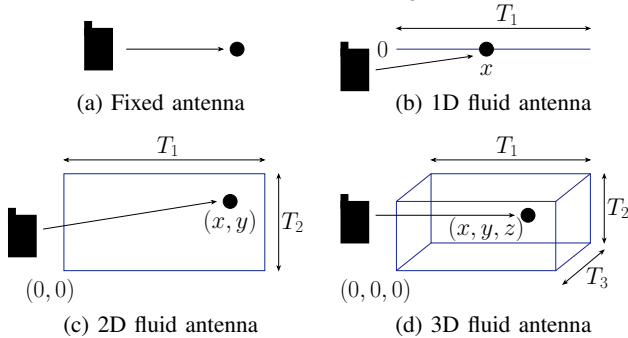
\begin{figure}[!ht]
\centering
\resizebox{0.46\textwidth}{!}{%
\begin{circuitikz}
\tikzstyle{every node}=[font=\Huge]
\node (tikzmaker) [shift={(-1.25, --1.5)}] at (12.25,19.25) {};
\draw [ color={rgb,255:red,17; green,39; blue,146} , line width=1pt ] (3,16.75) rectangle (10.25,13.25);
\draw [ fill={rgb,255:red,0; green,0; blue,0} ] (9,15.75) circle (0.25cm);
\node [font=\Huge] at (9,15) {$(x,y)$};
\node [font=\Huge] at (2.5,12.5) {$(0,0)$};
\draw [<->, >=Stealth] (3,17.25) -- (10.25,17.25);
\draw [<->, >=Stealth] (10.75,16.75) -- (10.75,13.25);
\node [font=\Huge] at (6.75,17.75) {$T_1$};
\node [font=\Huge] at (11.25,15) {$T_2$};
\draw [ fill={rgb,255:red,0; green,0; blue,0} ] (1,15.25) rectangle (2,13.75);
\draw [ fill={rgb,255:red,0; green,0; blue,0} ] (1,15) rectangle (1.25,15.5);
\draw [->, >=Stealth] (2.25,14.75) -- (8.5,15.75);
\draw [ color={rgb,255:red,17; green,39; blue,146} , line width=1pt ] (15,15.75) rectangle (20.75,13.25);
\draw [ color={rgb,255:red,17; green,39; blue,146} , line width=1pt ] (16.25,16.75) rectangle (22,14.25);
\draw [ color={rgb,255:red,32; green,35; blue,142}, line width=1pt, short] (15,15.75) -- (16.25,16.75);
\draw [ color={rgb,255:red,32; green,35; blue,142}, line width=1pt, short] (20.75,15.75) -- (22,16.75);
\draw [ color={rgb,255:red,32; green,35; blue,142}, line width=1pt, short] (15,13.25) -- (16.25,14.25);
\draw [ color={rgb,255:red,32; green,35; blue,142}, line width=1pt, short] (20.75,13.25) -- (22,14.25);
\draw [<->, >=Stealth] (16.25,17.25) -- (22,17.25);
\draw [<->, >=Stealth] (22.5,16.75) -- (22.5,14.25);
\draw [<->, >=Stealth] (22.5,14) -- (21,12.75);
\node [font=\Huge] at (23,15.5) {$T_2$};
\node [font=\Huge] at (19,17.75) {$T_1$};
\node [font=\Huge] at (22.25,13) {$T_3$};
\draw [ fill={rgb,255:red,0; green,0; blue,0} ] (19.75,15.25) circle (0.25cm);
\node [font=\Huge] at (18.25,14.75) {$(x,y,z)$};
\draw [ fill={rgb,255:red,0; green,0; blue,0} ] (13,15.75) rectangle (14,14.25);
\draw [ fill={rgb,255:red,0; green,0; blue,0} ] (13,15.5) rectangle (13.25,16);
\draw [->, >=Stealth] (14.25,15.25) -- (19.25,15.25);
\node [font=\Huge] at (14.25,12.5) {$(0,0,0)$};
\draw [ color={rgb,255:red,32; green,35; blue,142} , line width=1pt ] (15,21.25) rectangle (22,21.25);
\draw [ fill={rgb,255:red,0; green,0; blue,0} ] (18.25,21.25) circle (0.25cm);
\draw [<->, >=Stealth] (15,22) -- (22,22);
\node [font=\Huge] at (19,22.5) {$T_1$};
\node [font=\Huge] at (18.25,20.5) {$x$};
\draw [ fill={rgb,255:red,0; green,0; blue,0} ] (13,21.25) rectangle (14,19.75);
\draw [ fill={rgb,255:red,0; green,0; blue,0} ] (13,21) rectangle (13.25,21.5);
\draw [->, >=Stealth] (14.25,20.5) -- (17.75,21);
\node [font=\Huge] at (14.5,21.25) {$0$};
\node [font=\Huge] at (18.5,19.25) {(b) 1D fluid antenna};
\node [font=\Huge] at (6.5,11.25) {(c) 2D fluid antenna};
\node [font=\Huge] at (18.5,11.25) {(d) 3D fluid antenna};
\draw [ fill={rgb,255:red,0; green,0; blue,0} ] (3.75,22) rectangle (4.75,20.5);
\draw [ fill={rgb,255:red,0; green,0; blue,0} ] (3.75,21.75) rectangle (4,22.25);
\draw [->, >=Stealth] (5.25,21.25) -- (8.75,21.25);
\node [font=\LARGE] at (18.75,20.75) {};
\node [font=\LARGE] at (18.75,20.75) {};
\draw [ fill={rgb,255:red,0; green,0; blue,0} ] (9.25,21.25) circle (0.25cm);
\node [font=\Huge] at (6.5,19.25) {(a) Fixed antenna
};
\node [font=\Huge] at (18.5,18.25) {};
\node [font=\Huge] at (6.5,9) {};
\node [font=\Huge] at (18.75,9) {};
\end{circuitikz}
}%
\vspace{-2.5em}
\caption{Fluid antenna geometries for multiple dimensions.}
\label{fig:dimensions}
\end{figure}

One of the most important physical factors governing the CFAS performance is the correlation of the channel over  $A$. As is typical in the FAS space \cite{kkwong1,kkwong2,FAS_6G,psomas}, we consider isotropic correlation models of the form
\begin{equation}
    \rho(\tau) = \tfrac{1}{2}\mathbb{E}\left[h(\mathbf{t})h^*(\mathbf{t+\Delta})\right],
\end{equation}
where $\tau=||\mathbf{\Delta}||$ is the Euclidean separation between $\mathbf{t}$ and $\mathbf{t+\Delta}$. We further assume that 
$\rho(\tau)\sim 1 -a\tau^2$ as $\tau \rightarrow 0$, which ensures a mean-square differentiable channel (physically reasonable Rayleigh fading). An important parameter related to the correlation is the variance of the channel derivative, which we define as $\lambda_2$. It is known that $\lambda_2=\mathrm{Var}\big[h'(\mathbf{t})\big]=2a$ \cite{jakes}, where in isotropic fading the derivative can be taken in any spatial direction. For the case $\rho(\tau)=J_0(2\pi\tau)$, $\lambda_2=2\pi^2$ \cite{foschini}. Note that all distances (i.e, $\tau$, $T_i$, $i=1,2,3$) are measured in wavelengths throughout the paper.

\section{Analysis}
\label{sec:analysis}
In this section, we look at known results and methods in the fixed antenna and 1D CFAS cases and extend these results into 2D and 3D CFAS scenarios using random field theory.
\vspace{-0.5em}
\subsection{Fixed Antenna}
Here, $X(t)=X(0)$ as $t\in A = \{0\}$ is a fixed point. Hence, $X(t)\sim\mathrm{Exp}({1}/{2})$ and
\begin{equation}
    P_{hs}^{(0)} = P(X(t) \geq u_0) = \mathrm{e}^{-u_0/2}. \label{eq:0D}
\end{equation}
This is a simple Rayleigh fading result at a single point. Note that the superscript 0 is used in this case, as the antenna has no dimensions in which to move.
\vspace{-0.5em}

\subsection{One Dimension}
In 1D, the process is $X(t)$, $t\in[0,T_1]$, and
\begin{equation}\label{eq1d}
   P_{hs}^{(1)} = P\bigg(\sup_{{t}\in [0,T_1]}\left\{X({t})\right\} \geq u_0\bigg).
\end{equation}
For large thresholds, the approach in \cite{davies} can be used to give an asymptotically exact result for \eqref{eq1d} (asymptotic as $u_0\rightarrow\infty$). This approach is detailed in the appendix of \cite{psomas}. Throughout the paper, we denote the asymptotic approximations for the tail probabilities in \eqref{eq:Ppp} in $n$-dimensions as $ \tilde{P}_{hs}^{(n)}$. With this notation, applying the method in  \cite{psomas} to the 1D case gives
\begin{equation}
    \tilde{P}_{hs}^{(1)} = P\left(X(0)\geq u_0\right) + T_1\,\mathrm{LCR}(u_0),
\end{equation}
where $\mathrm{LCR}(u_0)$ is the level crossing rate (LCR) of $X(t)$ across $u_0$. For Rayleigh fading, we have $P\left(X(0)\geq u_0\right)=\mathrm{e}^{-u_0/2}$ as in (\ref{eq:0D}) and, from \cite{jakes}
\begin{equation}
    \mathrm{LCR}(u_0) = \sqrt{\frac{\lambda_2 u_0}{2\pi}}\mathrm{e}^{-u_0/2}.
\end{equation}
Hence, we obtain
\begin{equation}\label{eq:1D}
    \tilde{P}_{hs}^{(1)} = \mathrm{e}^{-u_0/2}\bigg(1+T_1\sqrt{\frac{\lambda_2u_0}{2\pi}}\bigg).
\end{equation}
As we move into the 2D and 3D space, the existing methodology is difficult to use as level crossings are much more complicated in 2D and 3D. Similarly, it is difficult to apply other established techniques from the literature. For example, order statistics are only suitable for discrete FASs while high SNR approximations shift the whole SNR process and thus do not simplify finding the HSP. Also, while extreme value theory can be followed in 1D, it is not directly applicable to higher dimensions. Options were listed in \cite{blake_communications_1986} to extend the LCR approach to higher dimensions, and the expected Euler characteristic (EEC) was suggested as it is defined for random fields in arbitrary dimensions. Additionally, it is simpler than other proposed approaches, such as Slepian processes \cite{blake_communications_1986}. Hence, the EEC method is developed below.
\vspace{-10pt}

\subsection{Arbitrary Numbers of Dimensions}
In random field theory, the established way to approximate $P_{hs}$ is via the EEC, namely, $\tilde{P}_{hs} = \mathrm{EEC}$, when the EEC is computed for a particular field above a threshold, $u_0$. In (15.10.1) of \cite{adler}, we have the result

\begin{equation}
    \mathrm{EEC} = \sum_{j=0}^{\mathrm{dim}(A)} L_j(A)\rho_j(u_0), \label{eq:EEC}
\end{equation}
where $L_j(A)$ are Lipschitz-Killing curvatures of $A$ and $\rho_j(u_0)$ are the Euler Characteristic (EC) densities. The 1D HSP analysis utilizing the LCR provides insights into the EEC. In the 1D example, threshold exceedances occur in isolated intervals of the line where the SNR is above $u_0$. The LCR is based on counting the numbers of these regions. In higher numbers of dimensions, for large $u_0$, exceedances tend to occur in isolated, compact regions (islands in 2D, ``blobs" in 3D). The EEC is a topological approach which essentially provides a normalized count of these regions in an analogous way to the LCR approach.   For a $\chi_k^2$ process, the EEC gives an asymptotically exact result as $u_0\rightarrow\infty$. In our application, we have a $\chi_2^2$ process, $\mathrm{dim}(A)\in\{0,1,2,3\}$ and the sets, $A$, are points, lines, rectangles or cuboids. A challenge in using (\ref{eq:EEC}) is computing the EC densities, but Theorem 15.10.1 in \cite{adler} provides this result for a $\chi_k^2$ process.

\begin{res}
\label{res:ECdensities}
The EC densities for a $\chi_k^2$ random field are
\begin{equation}
    \rho_0(u_0) = P\left(\chi_k^2 \geq u_0\right),
\end{equation}
\begin{multline}
\label{eq:rhoju0}
    \rho_j(u_0) = \frac{u_0^{(k-j)/2}\mathrm{e}^{-u_0/2}}{2\pi^{j/2}\,\Gamma\!\left(\frac{k}{2}\right)2^{(k-2)/2}}\sum_{l=0}^{\lfloor\frac{j-1}{2}\rfloor}\sum_{m=0}^{j-1-2l}1_{\{k\geq j-m-2l\}} \\ \times\!\binom{k\!-\!1}{j\!-\!1\!-\!m\!-\!2l}\frac{(-1)^{j-1+m+l}\,(j-1)!\,u_0^{m+l}}{m!\,l!\,2^l}.
\end{multline}
\end{res}
\noindent Using Result \ref{res:ECdensities}, we now derive the key results of the paper.

\begin{lem}
    \label{lem:Ppp_alldims}
    In 0 dimensions,
    \begin{equation}
        P^{(0)}_{hs} = \mathrm{e}^{-u_0/2}.
    \end{equation}
    In 1 dimension,
    \begin{equation}
        \tilde{P}^{(1)}_{hs} = \mathrm{e}^{-u_0/2}\bigg(\!1+T_1\sqrt{\frac{\lambda_2u_0}{2\pi}}\,\bigg)\!.
    \end{equation}
    In 2 dimensions,
    \begin{equation}
        \tilde{P}^{(2)}_{hs} = \mathrm{e}^{-u_0/2}\bigg(\!\!1\!+\! \sqrt{\frac{\lambda_2u_0}{2\pi}}(T_1\!+\!T_2)\!+\! \frac{\lambda_2(u_0\!-\!1)}{2\pi}T_1T_2\!\!\bigg)\!.\!
    \end{equation}
    In 3 dimensions,
    \begin{align}
        &{\tilde{P}^{(3)}_{hs}} = \mathrm{e}^{-u_0/2}\bigg(\!1\!+\!\sqrt{\frac{\lambda_2u_0}{2\pi}}(T_1\! +\! T_2\! +\! T_3) \!+\! \frac{\lambda_2(u_0\!-\!1)}{2\pi} \notag \\ &\!\times \!(T_1T_2 \!+\! T_1T_3 \!+\! T_2T_3)\! +\!\! \bigg(\!\frac{\lambda_2}{2\pi}\!\bigg)^{\!\!3/2}\!\!\!\!T_1T_2T_3\!\left(\!u_0^{3/2}\!\!-\!3u_0^{1/2}\!\right)\!\!\!\bigg)\!.\!\!
    \end{align}
\end{lem}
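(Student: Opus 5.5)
The plan is to evaluate the EEC formula (\ref{eq:EEC}) with $k=2$, since $X(\mathbf{t})=|\tilde h(\mathbf{t})|^2$ is a $\chi_2^2$ field built from two independent real Gaussian components. Each component has unit variance and derivative variance $\lambda_2$ in every direction. We then set $\tilde{P}^{(n)}_{hs}=\mathrm{EEC}$ for $A$ a point, segment, rectangle or cuboid. Two ingredients are needed: the EC densities $\rho_j(u_0)$ for $j=0,\dots,3$ from Result~\ref{res:ECdensities}, and the Lipschitz--Killing curvatures $L_j(A)$.

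\textbf{Lipschitz--Killing curvatures.} The $L_j(A)$ are computed in the Riemannian metric induced by the field. Isotropy gives $\mathrm{Var}[\partial h/\partial t_i]=\lambda_2$ and uncorrelated orthogonal derivatives, so the induced metric is $\lambda_2$ times the Euclidean one. The $L_j$ are therefore the Euclidean intrinsic volumes scaled by $\lambda_2^{j/2}$. For a box $\prod_i[0,T_i]$, the intrinsic volume $L_j$ is the $j$-th elementary symmetric polynomial in the side lengths. This gives:
\begin{itemize}
\item $L_0=1$ in every case;
\item $L_1=\sqrt{\lambda_2}\sum_i T_i$;
\item $L_2=\lambda_2\sum_{i<l}T_iT_l$;
\item $L_3=\lambda_2^{3/2}T_1T_2T_3$.
\end{itemize}

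\textbf{EC densities.} Here $\rho_0=P(\chi_2^2\ge u_0)=\mathrm{e}^{-u_0/2}$, which immediately gives the 0D case. For $k=2$ we have $\Gamma(1)=1$ and $2^{(k-2)/2}=1$. I read the normalising constant in (\ref{eq:rhoju0}) as $(2\pi)^{j/2}$, which is Adler's convention. This reading is checked below against the 1D LCR result (\ref{eq:1D}), which is known independently.
\begin{itemize}
\item For $j=1$, only $l=m=0$ contributes, with $\binom{1}{0}=1$. This gives $\rho_1=\mathrm{e}^{-u_0/2}\sqrt{u_0/(2\pi)}$.
\item For $j=2$, we have $l=0$ and $m\in\{0,1\}$. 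The $m=0$ term is $-\binom11=-1$ and the $m=1$ term is $+u_0$. This gives $\rho_2=\mathrm{e}^{-u_0/2}(u_0-1)/(2\pi)$.
\item For $j=3$, the indicator $1_{\{2\ge 3-m-2l\}}$ removes $(l,m)=(0,0)$. The surviving terms are $(0,1)$, contributing $-2u_0$; $(0,2)$, contributing $+u_0^2$; and $(1,0)$, contributing $-u_0$. After multiplying by the prefactor $u_0^{-1/2}\mathrm{e}^{-u_0/2}/(2\pi)^{3/2}$, this gives $\rho_3=\mathrm{e}^{-u_0/2}(2\pi)^{-3/2}(u_0^{3/2}-3u_0^{1/2})$.
\end{itemize}

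\textbf{Assembly.} Summing $L_j\rho_j$ reproduces the four displayed formulas term by term. The 1D expression coincides with (\ref{eq:1D}) derived via the LCR, which confirms the metric scaling and the constant convention.

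I expect the main obstacle to be the bookkeeping rather than any conceptual step. Three points need care: the $\pm$ signs and the indicator in the double sum of (\ref{eq:rhoju0}), the normalising constant, and the claim that the induced metric is exactly $\lambda_2 I$ (this requires isotropy and the unit-variance component normalisation of Section~\ref{sec:sysmodel}). I would resolve the constants by matching the 1D case against the independently known LCR result before trusting the $j=2,3$ terms.
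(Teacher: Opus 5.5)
Your proposal is correct and follows the paper's proof. Like the paper, you set $k=2$ in Result~\ref{res:ECdensities}, take $L_j(A)=\lambda_2^{j/2}L_j^E(A)$ with the box intrinsic volumes (the elementary symmetric polynomials of the side lengths), and sum $L_j\rho_j$. You are also right to read the normalising constant as $(2\pi)^{j/2}$, since the $2\pi^{j/2}$ displayed in (\ref{eq:rhoju0}) would not reproduce the lemma's constants, and your term-by-term values of $\rho_1,\rho_2,\rho_3$, including the indicator and sign bookkeeping for $j=3$, check out.
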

\begin{proof}
    As the maximum number of dimensions is 3 and the field is $\chi_2^2$, we substitute $k=2$ into (\ref{eq:rhoju0}) and compute $\rho_j(u_0)$ for $j\in\{0,1,2,3\}$. To compute the Lipschitz-Killing curvatures, we use the result in \cite[p. 324, p. 333]{adler},
    \begin{equation}
        L_j(A) = \lambda_2^{j/2}L_j^E(A), \label{eq:LjA}
    \end{equation}
    where $L_j^E(A)$ are the intrinsic volumes defined in Table \ref{tab:euclideanintrinsicvalues}. Substituting the values of $\rho_j(u_0)$ and using (\ref{eq:LjA}) and Table \ref{tab:euclideanintrinsicvalues} gives the results in Lemma \ref{lem:Ppp_alldims}.
\vspace{-0.5em}
\end{proof}
\vspace{-1em}
\begin{table}[ht!]
\centering
\caption{Euclidean Intrinsic Volumes}
\label{tab:euclideanintrinsicvalues}
\renewcommand{\arraystretch}{1}
\begin{tabular}{|c|c|c|}
\hline
\multicolumn{1}{|c|}{Dimension} & \multicolumn{1}{c|}{Intrinsic Volume}  & \multicolumn{1}{c|}{Physical Meaning} \\ \hline
0D & $L_0^E(A)=1$ & \\ \hline
1D & \begin{tabular}[c]{@{}c@{}}$L_0^E(A)=1$\\ $L_1^E(A)=T_1$\end{tabular} & \begin{tabular}[c]{@{}c@{}} \\ Length \end{tabular} \\ \hline
2D & \begin{tabular}[c]{@{}c@{}}$L_0^E(A)=1$\\ $L_1^E(A)=T_1+T_2$ \\ $L_2^E(A)=T_1T_2$ \end{tabular} &  \begin{tabular}[c]{@{}c@{}} \\ \hspace{-0.4em}$0.5\,\times\,$Boundary Length\hspace{-0.4em} \\ Area \end{tabular} \\ \hline
3D & \begin{tabular}[c]{@{}c@{}}$L_0^E(A)=1$\\ $L_1^E(A)=T_1+T_2+T_3$ \\ \hspace{-0.4em}$L_2^E(A)=T_1T_2 + T_1T_3 + T_2T_3$\hspace{-0.4em} \\ $L_3^E(A)=T_1T_2T_3$ \end{tabular} & \begin{tabular}[c]{@{}c@{}} \\ $2\,\times\,$Caliper Diameter \\ $0.5\,\times\,$Surface Area \\ Volume \end{tabular} \\ \hline 
\end{tabular}
\end{table}
\vspace{-0.5em}
Note that Lemma \ref{lem:Ppp_alldims} agrees with the separate characterization in \eqref{eq:0D} and \eqref{eq:1D} for the fixed antenna and 1D cases and extends the approach to the 2D and 3D  cases. Using the results in Lemma \ref{lem:Ppp_alldims}, we can state the approximate scaling laws below.

\begin{lem}
    \label{lem:dimscaling}
    The $n$-th dimension scales the value of $\tilde{P}_{hs}$ by approximately $1+T_n\sqrt{\frac{\lambda_2u_0}{2\pi}}$.
\end{lem}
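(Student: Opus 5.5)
The plan is to compare the expressions in Lemma~\ref{lem:Ppp_alldims} for consecutive dimensions and show that, for large $u_0$, each is approximately the previous one times $1+T_n\sqrt{\lambda_2u_0/(2\pi)}$. Write $c=\sqrt{\lambda_2 u_0/(2\pi)}$. The key algebraic step is to replace the factors $u_0-1$ and $u_0^{3/2}-3u_0^{1/2}$ by their leading terms $u_0$ and $u_0^{3/2}$, which is the natural approximation in the same large-threshold regime where the EEC itself is asymptotically exact. With this replacement, $\frac{\lambda_2(u_0-1)}{2\pi}\approx c^2$ and $\left(\frac{\lambda_2}{2\pi}\right)^{3/2}(u_0^{3/2}-3u_0^{1/2})\approx c^3$.

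After this substitution, the bracketed factors in Lemma~\ref{lem:Ppp_alldims} become elementary symmetric polynomials in $cT_1,\dots,cT_n$. For $n=2$ we get $1+c(T_1+T_2)+c^2T_1T_2=(1+cT_1)(1+cT_2)$. For $n=3$ we get $1+c\sum_i T_i+c^2\sum_{i<j}T_iT_j+c^3T_1T_2T_3=\prod_{i=1}^3(1+cT_i)$. Together with the exact cases $n=0,1$, this gives
\begin{equation*}
\tilde{P}_{hs}^{(n)}\approx \mathrm{e}^{-u_0/2}\prod_{i=1}^{n}\bigl(1+cT_i\bigr),
\end{equation*}
so $\tilde{P}_{hs}^{(n)}/\tilde{P}_{hs}^{(n-1)}\approx 1+T_n\sqrt{\lambda_2u_0/(2\pi)}$ for $n=1,2,3$, which is the claim.

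The main obstacle is justifying the word ``approximately,'' that is, showing the discarded terms are of lower order. The dropped pieces are $-\frac{\lambda_2}{2\pi}\sum_{i<j}T_iT_j$ in 2D and 3D, and $-3\left(\frac{\lambda_2}{2\pi}\right)^{3/2}u_0^{1/2}T_1T_2T_3$ in 3D. Take a product-form term of degree $n$ in $c$, which grows like $u_0^{n/2}$. The corresponding discarded correction grows like $u_0^{n/2-1}$, so its relative size is $O(1/u_0)$.

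I would make this precise by writing $\tilde{P}_{hs}^{(n)}=\mathrm{e}^{-u_0/2}\bigl[\prod_{i\le n}(1+cT_i)\bigr](1+\epsilon_n)$. For fixed $T_i>0$, the leading term $c^n\prod_i T_i$ dominates both numerator and denominator, giving $\epsilon_n=O(u_0^{-1})$ as $u_0\to\infty$. The ratio statement then holds up to a factor $1+O(u_0^{-1})$. I would also note that the relative error is smallest when $u_0\gg1$, the high-SNR regime of interest. Finally, because the $\chi_2^2$ EC densities produce only these specific corrections, the product form is exact in 0D and 1D.
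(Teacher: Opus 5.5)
Your proposal is correct and follows essentially the same route as the paper. The paper likewise compares consecutive expressions from Lemma~\ref{lem:Ppp_alldims}: it writes $\tilde{P}_{hs}^{(n)}=\tilde{P}_{hs}^{(n-1)}\bigl(1+T_n\sqrt{\lambda_2u_0/(2\pi)}\bigr)+R_n$ with explicit remainders $R_2,R_3$ (as written, these correspond to $\lambda_2=2\pi^2$) and notes they are lower order for large $u_0$; your comparison with the full product $\mathrm{e}^{-u_0/2}\prod_i(1+cT_i)$, with relative error $O(u_0^{-1})$, repackages the same algebra slightly more explicitly and for general $\lambda_2$.
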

\begin{proof}
    From Lemma \ref{lem:Ppp_alldims}, we observe that moving from a fixed antenna to a linear fluid antenna gives
    \begin{equation}
        \tilde{P}^{(1)}_{hs}=P^{(0)}_{hs}\times\left(1+T_1\sqrt{\frac{\lambda_2u_0}{2\pi}}\right).
    \end{equation}
    Hence, the scaling is exact in 1D. Moving to 2D, some simple algebra gives
    \begin{equation}
        \tilde{P}_{hs}^{(2)} = \tilde{P}_{hs}^{(1)}\times\left(1+T_2\sqrt{\frac{\lambda_2u_0} {2\pi}}\right)+ R_2,
    \end{equation}
    where the remainder term is $R_2 = -T_1T_2\pi\mathrm{e}^{-u_0/2}$. Note that for large $u_0$, $R_2$ is considerably smaller than the leading term. Similarly, for 3D,
    \begin{equation}
        \tilde{P}_{hs}^{(3)} = \tilde{P}_{hs}^{(2)}\times\left(1+T_3\sqrt{\frac{\lambda_2u_0}{2\pi}}\right) + R_3,
    \end{equation}
    where $R_3 = -T_3\pi\mathrm{e}^{-u_0/2}\left(T_1+T_2+2T_1T_2\sqrt{\pi u_0}\right)$. Again, $R_3$ is small compared to the leading term for large $u_0$.
\end{proof}

The accuracy of  Lemmas \ref{lem:Ppp_alldims} and \ref{lem:dimscaling} is demonstrated in Section \ref{sec:numresults}. The scaling laws offer remarkably simple insights into CFAS performance. The $n$-th dimension scales $\tilde{P}_{hs}$ by a scalar multiplicative factor which is linear in $T_n$, proportional to $\sqrt{u_0}$ and inherently incorporates spatial correlation. Note that this law is valid when the scaled result remains a small probability to ensure compliance with probability bounds.

In a symmetric scenario, where $T_1 = T_2 = T_3 = T$, we have $\tilde{P}_{hs}^{(n)}\approx P_{hs}^{(0)}\Big(1+T\sqrt{\tfrac{\lambda_2 u_0}{2\pi}}\Big)^n$. As an example, consider the Bessel function correlation with $\lambda_2=2\pi^2$, and the parameters $T=2$ (2 wavelengths), and $u_0=6.4$. For this scenario, 

\vspace{-0.4em} 
\begin{equation}
    \tilde{P}_{hs}^{(n)}\approx P_{hs}^{(0)}\times 10^n.
\end{equation}
Hence, each dimension gives a ten-fold improvement in the probability of exceeding $u_0=6.4$.

\subsection{Optimal Shapes}
A key question concerning 2D and 3D CFASs is the nature of the optimal shape. We define optimality as maximizing $\tilde{P}_{hs}^{(n)}$. In 1D, we simply take the longest line available, i.e. the largest $T_1$ available, as $\tilde{P}_{hs}^{(1)}$ is linearly increasing in $T_1$. In 2D, we could have a size constraint on the surface area ($S$), such that $T_1T_2\leq S$ and individual limits on the two dimensions, such that $T_1 \leq L_1$ and $T_2 \leq L_2$ where $L_1L_2 > S$\footnote{Note that a single area constraint is insufficient as the corresponding maximum is an infinitely long, infinitely thin surface.}. Hence, we are able to select a sub-rectangle of area $\leq S$ within $[0,L_1]\times[0,L_2]$ to enhance the performance. Similarly, in 3D, consider the volume constraint $T_1T_2T_3\leq V$ and individual limits $0\leq T_i\leq L_i$, $i=1,2,3$. Here, we can select a sub-cuboid of volume $\leq V$ within $[0,L_1]\times[0,L_2]\times[0,L_3]$. The optimal shapes are given in the following Lemma.

\begin{lem}
    \label{lem:optshapes}
    Without loss of generality, we assume that $L_1\leq L_2\leq L_3$. In the 2D case, the optimal sub-rectangle of dimension $T_1\times T_2$, which maximizes $\tilde{P}_{hs}^{(2)}$, is given by $T_2=L_2$, $T_1={S}/{L_2}$ for the following conditions: $0\leq T_i \leq L_i$, $i=1,2$, $T_1T_2\leq S$. In the 3D case, the optimal sub-cuboid of dimensions $T_1\times T_2\times T_3$ which maximizes $\tilde{P}_{hs}^{(3)}$ is given by $T_3=L_3$, $T_2=L_2$, $T_1={V}/{(L_2L_3)}$ for the following conditions: $0\leq T_i\leq L_i$, $i=1,2,3$, $T_1T_2T_3\leq V$.
\end{lem}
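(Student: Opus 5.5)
Write $c=\sqrt{\lambda_2u_0/(2\pi)}$ and $d=\lambda_2(u_0-1)/(2\pi)$, and drop the common positive factor $\mathrm{e}^{-u_0/2}$ from Lemma~\ref{lem:Ppp_alldims}. In 2D the task is then to maximize $f_2(T_1,T_2)=1+c(T_1+T_2)+dT_1T_2$ over the feasible set. We assume $u_0>1$, so that $d>0$; in 3D we also need $u_0>3$ so that the cubic coefficient $(\lambda_2/2\pi)^{3/2}(u_0^{3/2}-3u_0^{1/2})$ is positive. This is the high-threshold regime in which the lemma is meant.

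\textbf{Step 1: the area/volume constraint binds.} Every coefficient is positive and $f_2$ is increasing in each $T_i$. Since $L_1L_2>S$, any feasible point with $T_1T_2<S$ can increase some $T_i$ without leaving the feasible set. Hence the optimum lies on $T_1T_2=S$, where the product term is the constant $dS$. It therefore remains to maximize $T_1+T_2$ subject to $T_1T_2=S$, $T_i\le L_i$.

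\textbf{Step 2: 2D on the constraint curve.} Substitute $T_2=S/T_1$ and consider $g(T_1)=T_1+S/T_1$. This function is convex, so its maximum on an interval is attained at an endpoint. The feasible interval is $T_1\in[S/L_2,\min(L_1,S)]$, giving endpoint candidates $(S/L_2,L_2)$ and, if $L_1<S$, $(L_1,S/L_1)$ (or $T_1=S$ otherwise). To compare $(S/L_2,L_2)$ with $(L_1,S/L_1)$, write $g(a)-g(b)=(a-b)(1-S/(ab))$ with $a=L_2$ and $b=L_1$ (using the symmetry $g(x)=g(S/x)$). Since $L_1\le L_2$ and $L_1L_2>S$, this difference is $\ge0$. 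So the endpoint that puts the larger side at its full length $L_2$ wins, which gives $T_2=L_2$, $T_1=S/L_2$.

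\textbf{Step 3: 3D.} By the same monotonicity, $T_1T_2T_3=V$ and both the volume and cubic terms are constant. We then maximize $c\sum T_i+d\sum_{i<j}T_iT_j$. Fix $T_3$; with $S'=V/T_3$, the remaining objective in $(T_1,T_2)$ is $c(T_1+T_2)+dT_3(T_1+T_2)+dS'$. This is again increasing in $T_1+T_2$ at fixed product, so Step 2 applies, assigning the largest allowed length to the larger index. Symmetrically, the cuboid should saturate the two largest limits, $T_3=L_3$ and $T_2=L_2$, leaving $T_1=V/(L_2L_3)$.

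The main obstacle is making this rigorous: showing that the coordinate-wise reduction reaches the global optimum on the surface $T_1T_2T_3=V$. It also requires feasibility, $V/(L_2L_3)\le L_1$, which is implied by the analogue of $L_1L_2>S$, namely $L_1L_2L_3>V$. I would handle this with a convexity/Schur-type argument. Set $x_i=\log T_i$, with $\sum x_i=\log V$ fixed and box constraints $x_i\le\log L_i$. Both $\sum e^{x_i}$ and $\sum_{i<j}e^{x_i+x_j}=V\sum e^{-x_k}$ are Schur-convex, so the objective is maximized at the most "spread" feasible vector. Because the lower bounds are absent (only $T_i\ge0$), that vector is obtained greedily by filling the largest caps $L_3$, then $L_2$, and putting the remainder in $T_1$.
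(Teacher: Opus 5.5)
Your proof is correct. In 2D it follows the paper's argument: monotonicity makes $T_1T_2=S$ bind, convexity of $T+S/T$ puts the maximum at an endpoint, and the endpoints are compared. Your factorization $(L_2-L_1)\bigl(1-S/(L_1L_2)\bigr)\ge 0$ makes explicit the comparison the paper only asserts, and it rightly gives $\ge$; the paper's strict inequality fails, harmlessly, when $L_1=L_2$.

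One slip: the feasible interval is $T_1\in[S/L_2,L_1]$. Nothing imposes $T_1\le S$, so delete the $\min(L_1,S)$ and the candidate $T_1=S$. Your comparison of $(S/L_2,L_2)$ with $(L_1,S/L_1)$ is then exactly the endpoint comparison needed.

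In 3D you take a genuinely different route. The paper eliminates $T_1=V/(T_2T_3)$ and only sketches a calculus argument: interior stationary points are minima, followed by ``close inspection of the edges.'' Written out, that means maximizing separately along the arcs $T_2=L_2$, $T_3=L_3$ and $T_2T_3=V/L_1$, then comparing.

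Your majorization argument replaces all of that. On $\sum_i x_i=\log V$ the non-constant part of the objective is $c\sum_i e^{x_i}+dV\sum_i e^{-x_i}$. This is a positive combination of symmetric convex functions, hence Schur-convex. So it suffices that $x^*=(\log V-\ell_2-\ell_3,\ell_2,\ell_3)$, with $\ell_i=\log L_i$, majorizes every feasible $x$.

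Write that check out instead of just saying ``greedily.'' Any feasible $x$ has largest entry at most $\ell_3$, and its two largest entries sum to at most $\ell_2+\ell_3$. The vector $x^*$ attains both bounds, and it is correctly ordered since $x^*_1\le\ell_1\le\ell_2$ by $V\le L_1L_2L_3$.

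What this buys you:
\begin{itemize}
\item a complete proof where the paper has only a sketch;
\item clean handling of ties $L_i=L_j$;
\item an immediate extension to $n$-dimensional boxes, because every intrinsic volume of a box is an elementary symmetric polynomial of the $T_i$, i.e.\ a sum of exponentials of linear forms in $x$.
\end{itemize}
The cost is importing majorization theory and needing every coefficient positive ($u_0>1$, and $u_0>3$ for your 3D binding step). You correctly make that assumption explicit; the paper leaves it implicit.

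Finally, the coordinate-wise reduction you flagged as the main obstacle is already rigorous as it stands. The feasible set is compact, so a maximizer exists. Applying your 2D result to each pair with the third side frozen then gives necessary conditions at that maximizer. These force the two largest caps to be saturated, up to relabelling tied coordinates.
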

\begin{proof}
    We give the proof for the 2D case. The 3D case follows similar steps, but is lengthy and is outlined in the proof sketch in Appendix A. From Lemma \ref{lem:Ppp_alldims}, we have $\tilde{P}_{hs}^{(2)}=a+b(T_1+T_2)+cT_1T_2$ for positive constants $a,
    b,c$. As $\tilde{P}_{hs}^{(2)}$ is monotonically increasing in $T_1$ and $T_2$, it follows that the area constraint $T_1T_2\leq S$ becomes an equality, $T_1T_2=S$. Hence, $T_1 = {S}/{T_2}$ and
    \begin{equation}
        \tilde{P}_{hs}^{(2)} = a+b\left(\tfrac{S}{T_2}+T_2\right)+cS.
    \end{equation}
    Hence, optimizing $\tilde{P}_{hs}^{(2)}$ is equivalent to maximizing $g(T_2)=\tfrac{S}{T_2}+T_2$ over the limits $0\leq \tfrac{S}{T_2}\leq L_1$, $0\leq {T_2}\leq L_2$. Note that $g(T_2)$ has a single turning point at $T_2=\sqrt{S}$ for $T_2\geq0$ and this is a minimum as $g''(T_2)=\tfrac{2S}{T_2^3}>0$. Hence, the maximum value of $g(T_2)$ occurs at an edge point, $T_2\in\{\tfrac{S}{L_1},L_2\}$. Evaluating $g(\cdot)$ at both points shows that $g(L_2)>g(\tfrac{S}{L_1})$. Hence, the optimal shape is $T_2=L_2$, $T_1=\frac{S}{L_2}$.
\end{proof}

We note that the optimal sub-rectangle and the optimal sub-cuboid are the \textit{least} compact shapes possible. The dimensions are as far away from a square in 2D and a cube in 3D as possible. Hence, in 2D, the dimension with the highest limit is maximized ($T_2=L_2$) and then the second dimension is fixed by the surface area constraint ($T_1=\tfrac{S}{T_2})$. In 3D, the two dimensions with the highest limits are maximized ($T_3=L_3, T_2=L_2$) and the third dimension is set by the volume constraint ($T_1=\tfrac{V}{T_2T_3}$). While this result is intuitive, it is not always simple to prove generically, and the methods of analysis used in this work have allowed us to do so. 

\section{Numerical Results}\label{sec:numresults}
This section verifies the analytical results and explores the
system behavior. As discussed in Section \ref{sec:sysmodel}, the correlation over distance $\tau$ is represented by the classical Jakes' model, where $\rho(\tau)=J_0(2\pi\tau)$ \cite{foschini}. Although this model is used in the simulations, the analysis covers all correlation models of the form $\rho(\tau)\sim 1 -a\tau^2$ as $\tau \rightarrow 0$ . For simulated results, channels are generated every $0.01\lambda$ across each dimension over the possible antenna space to identify the location with the maximum SNR, and $10^6$ replicates are generated.
\vspace{-0.5em}

\subsection{High SNR Probability and Dimensional Scaling}

Figure \ref{fig:quarterwavelength} validates the analytical HSP results in Lemma \ref{lem:Ppp_alldims}. The logarithmically scaled complementary CDF is plotted for a  CFAS with $n=\{0, 1, 2, 3\}$ dimensions, where $T=T_1=T_2=T_3=0.25\lambda$. These results are compared to the dimensional scaling law proposed in Lemma \ref{lem:dimscaling}, $\tilde{P}_{hs}^{(n)} \approx P_{hs}^{(0)}(1+T\sqrt{\tfrac{\lambda_2 u_0}{2\pi}})^n$, highlighting its accuracy. 

The small value of $T=0.25\lambda$ was selected for these results due to the computational challenges of the 3D scenario. Channels must be generated for a sufficient number of antenna locations within the region of interest, $A$, to accurately simulate the HSP of a CFAS. All of these channels are correlated, which requires the creation of very large correlation matrices. For example, for a 3D CFAS where channels are generated at 27 equally spaced points in each dimension, the correlation matrix has dimensions of $27^3\times 27^3 = 19,683 \times 19,683 $, a matrix with a total of $7.6\times 10^{12}$ elements. If each element requires 8 bytes of storage, this matrix fills nearly 16 GB of RAM. The square root of this matrix is also required, as well as $3.8\times 10^{13}$ matrix multiplications to compute all $10^6$ replicates. This leads to both time and storage issues with standard software. The challenges above particularly highlight the benefits of analytical results for 3D CFASs.

\begin{figure}[ht]
    \centering
    \includegraphics[scale=0.55,trim={0.93cm 0.05cm 1.5cm 0.67cm},clip]{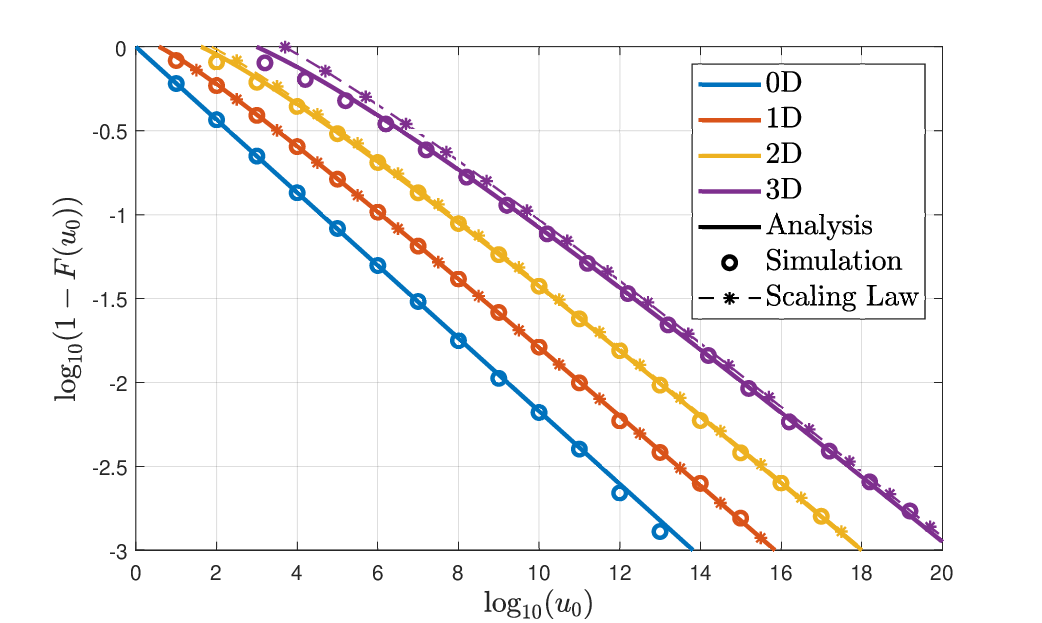}
    \caption{A comparison of the analytical, simulated and scaling law HSPs for 0-3 dimension CFAs, where each dimension is set to $0.25\lambda$.}
    \label{fig:quarterwavelength}
\end{figure}

Figure \ref{fig:quarterwavelength} shows that the analytical and simulated values match well, particularly in the small values of the upper tail (such as below $0.1$). Additionally, the scaling laws closely match the full $\tilde{P}_{hs}^{(n)}$ results, proving their accuracy in assessing the impact of increasing CFAS dimensions.

Finally, the enhancement offered from increasing the dimension of the CFAS is clear. This is intuitive from a physical perspective - more dimensions lead to more available antenna positions, which increases the likelihood of the existence of a position resulting in performance above an SNR threshold. 
\vspace{-0.5em}

\subsection{Impact of Antenna Shape}
Figure \ref{fig:dims} investigates the impact of varying the antenna shapes in 2D and 3D scenarios. The surface area of the 2D CFAS and the volume of the 3D CFAS are fixed to $1$, and a range of side lengths are considered. 

\begin{figure}[ht]
    \centering
    \includegraphics[scale=0.55,trim={0.93cm 0.05cm 1.5cm 0.67cm},clip]{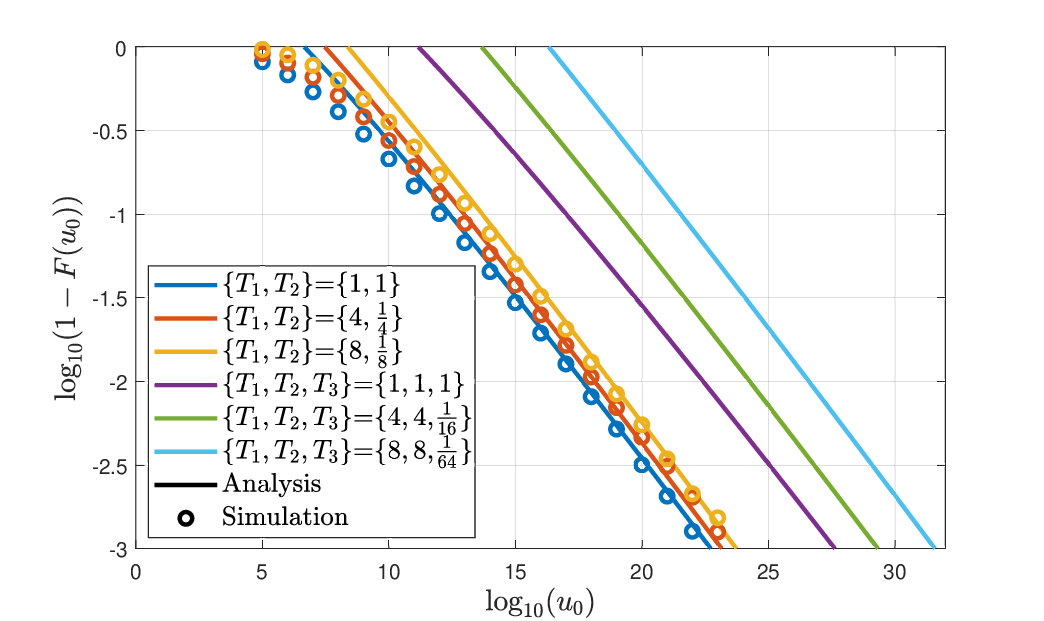}
    \caption{The analytical HSP for 2 and 3 dimensional CFASs, where each dimension is set in multiples of $\lambda$ according to the legend.}
    \label{fig:dims}
\end{figure}

As predicted in Lemma \ref{lem:optshapes}, the HSP is higher for ``non-compact" shapes of $A$ when compared to compact, symmetric structures. For example, in the 2D scenario, the HSP is higher when $T_1=8$ and $T_2=\frac{1}{8}$ ($A$ is a very long, thin rectangle) than when $T_1=T_2=1$ ($A$ is a square). Similarly, in the 3D scenario, the HSP is higher when $A$ is made very large in two dimensions than when $A$ is a cube. Non-compact structures increase the Euclidean distance between opposite vertices of $A$. This leads to greater spatial diversity and thus increases the probability that a point in $A$ results in an SNR $>u_0$.

Simulated results are shown for the 2D scenario. Again, the simulations are very accurate for HSP values below $0.1$. Compared to Fig.~\ref{fig:quarterwavelength}, we consider a larger volume for $A$  in the 3D scenario. Due to the computational challenge of simulating enough channels to represent the larger space accurately, simulations are not produced here. The reliance on analytical expressions to explore the 3D scenario highlights their importance again. Furthermore, the expected improvement in HSP occurs as the dimensionality of $A$ increases. While the total geometric measure remains fixed, the third dimension introduces a new degree of freedom and thus an increase in the possible antenna positions.
\vspace{-0.5em}

\section{Conclusion}
In this work, we derived novel, asymptotically exact expressions for the HSP of spatially coherent isotropically correlated 2D and 3D CFASs using the EEC from random field theory. Using known results for the HSP of fixed antennas and 1D CFASs and the derived 2D and 3D HSP expressions, we proposed a remarkably accurate approximate dimensional scaling law that relates the HSP of a fixed antenna to that of an $n$-dimensional CFAS, providing useful insights into fundamental system behavior. Finally, we showed that the optimal shape for 2D rectangular and 3D cuboidal CFASs is the least compact shape allowed by individual dimensional limits. This allows for more spatial diversity in the system.

\section*{Appendix A \\ Sketch Proof for the 3D scenario in Lemma \ref{lem:optshapes}}
As the objective function is increasing in $T_1, T_2$ and $T_3$, the constraint $T_1 T_2 T_3 \le V$ becomes the equality $T_1 T_2 T_3= V$. Next, for positive constants $a, b, c, d$, from Lemma \ref{lem:Ppp_alldims}, we have
    \begin{equation}\notag
        \tilde{P}_{hs}^{(3)}=a\!+\!b\left(\tfrac{V}{T_2T_3}\!+\!T_2\!+\!T_3\right)\!+\!c\left(\tfrac{V}{T_3}\!+\!T_2T_3\!+\!\tfrac{V}{T_2}\right)\!+\!dV.
    \end{equation}
    As in the 2D case, stationary points inside the allowable region $\{0 < T_2 \le L_2,0<T_3<L_3\}\cup\{T_2T_3 \ge \frac{V}{L_1}\}$ yield minima and the maxima are attained at the edges. Close inspection of the edges shows that $T_2=L_2$ and $T_3=L_3$ are optimal.
\bibliographystyle{IEEEtran}

\end{document}